\tikzstyle{startstop} = [rectangle, rounded corners, minimum width=3cm, minimum height=1cm,text centered, draw=black, fill=red!30]
\tikzstyle{io} = [rectangle, minimum width=4.5cm, minimum height=1.8cm, text centered,  text width=4cm, draw=black, fill=white!30]
\tikzstyle{process} = [rectangle, minimum width=3cm, text width=8cm, minimum height=1cm, text centered, draw=black, fill=orange!30]
\tikzstyle{decision} = [diamond, minimum width=3cm, minimum height=1cm, text centered, draw=black, fill=green!30]
\tikzstyle{arrow} = [thick,->,>=stealth, text width = 110]
\newcommand{\E}{\mathbb{E}}
\newcommand{\var}{\mathbb{V}\mathrm{ar}} 
\newcommand{\pr}{\mathbb{P}}
\newcommand{\real}{{\mathbb R}}
\newcommand{\nat}{{\mathbb N}}
\newcommand{\X}{{\mathbf X}}
\newcommand{\Y}{{\mathbf Y}}
\newcommand{\D}{{\mathbf D}}
\newcommand{\Z}{{\mathbf Z}}
\newcommand{\F}{{\mathbf F}}
\newcommand{\Ncal}{\mathcal{N}}
\newcommand{\Pcal}{\mathcal{P}}
\newcommand{\miroslav}[1]{{\color{Brown}#1}}
\begin{document}
%
\title{Moment-based Invariants for Probabilistic Loops with Non-polynomial Assignments\thanks{Supported by the Vienna Science and Technology Fund (WWTF ICT19-018), the TU Wien Doctoral College (SecInt), the FWF research
projects LogiCS W1255-N23 and P 30690-N35, and the ERC Consolidator Grant ARTIST 101002685.}}
%
 \titlerunning{Moment-based Invariants for PPs with Non-polynomial Updates}
%
\author{Andrey Kofnov\inst{1} 
\and {Marcel Moosbrugger} \inst{2}
\and {Miroslav Stankovi{\v{c}}} \inst{2}
\and \\ {Ezio Bartocci} \inst{2}
\and {Efstathia Bura} \inst{1}
}
\authorrunning{A. Kofnov et al.}
%
\institute{Applied Statistics, Faculty of Mathematics and Geoinformation, TU Wien \\
\and Faculty of Informatics, TU Wien}
\maketitle              
\begin{abstract}

We present a method to automatically approximate moment-based invariants of probabilistic programs with non-polynomial updates of continuous  state variables to accommodate more complex dynamics. Our approach leverages polynomial chaos expansion to approximate  non-linear functional updates as sums of orthogonal polynomials. We exploit this result to automatically estimate  state-variable moments of all orders in Prob-solvable loops with non-polynomial updates. 
We showcase the accuracy of our estimation approach in several examples, such as the turning vehicle model and the Taylor rule in monetary policy.

\keywords{Probabilistic programs  \and Prob-solvable loops \and Polynomial Chaos Expansion \and Non-linear updates}
\end{abstract}
\section{Introduction}

Probabilistic programs (PPs) are becoming widely employed in many areas including AI applications, security/privacy protocols or modeling stochastic dynamical systems.
The study of the properties of these processes requires knowledge of their distribution; that is, the distribution(s) of the random variable(s) generated by executing the probabilistic program.

The characterization of many distributions can be accomplished via their moments.  In~\cite{Bartoccietal2019} the authors introduced a class of probabilistic programs, \textit{Prob-solvable loops}, for which moment-based invariants over the state variables of the programs are automatically computed as a closed-form expression.  A Prob-solvable loop consists of an initialization section and a non-nested loop where the variables can be updated by drawing from  distributions determined by their moments (e.g., Bernoulli, Normal) and using polynomial arithmetic. 
However, modeling complex dynamics often requires the use of non-polynomial updates, such as in the \emph{turning vehicle} example in Fig.~\ref{fig:turningexample}.  An open research question 
is how to leverage the class of \textit{Prob-solvable loops} to estimate moment-based invariants as closed-form expressions for probabilistic loops with updates governed by non-polynomial non-linear functions.


\begin{figure}[!t]
\centering
  \includegraphics[scale=.46]{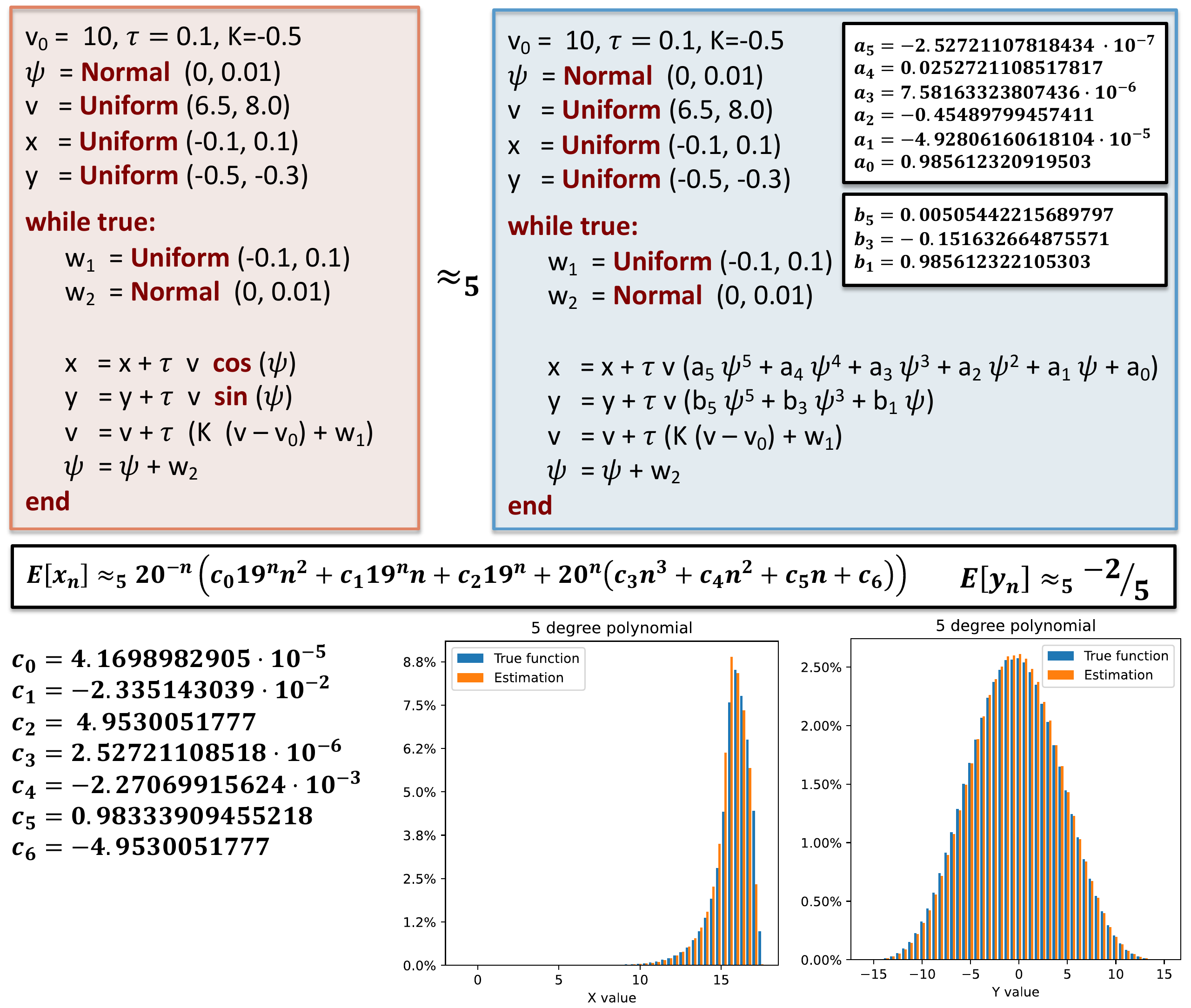}
  \caption{On the top left a probabilistic loop modeling the behaviour of a turning vehicle~\cite{Srirametal2020} using non-polynomial (cos, sin) updates in the loop body. On top right a Prob-Solvable loop obtained by approximating the cos, sin functions using polynomial chaos expansion (up to 5th degree).  In the middle the expected position $(x,y)$
  computed automatically from the Prob-Solvable loop as a closed-form expression in the number of the loop iterations $n$. In the bottom center 
  and right
  the comparison among the true and the estimated distribution for a fixed iteration (we execute the
  loop for $n=20$ iterations and $8 \cdot 10^5$ repetitions).} 
  \label{fig:turningexample}
\end{figure}

At the heart of our approach is the  decomposition of a random function into a linear combination of basis functions that are orthogonal polynomials.   
We accommodate non-polynomial updates of program variables to allow for more complicated dynamics. By expressing the non-linear functionals of the updates as sums of orthogonal polynomials, we can apply the approach in  \cite{Bartoccietal2019,BartocciKovacsStankovic2021} to automatically estimate the moments of all the program random variables. Our approach  is within the framework of general polynomial chaos expansion (gPCE) \cite{XiuKarniadakis2002a}. 
As such, it converges to the truth with guaranteed recovery of the moments of random variables with complex distributions.  We focus on state variables with continuous distributions with updates that are square-integrable functionals and use general polynomial chaos expansion to represent them.  In Fig.~\ref{fig:turningexample} we illustrate our approach via the turning vehicle example, where we estimate the expected position of a vehicle. In this example, we approximate the original cosine and sine functions with 5$th$ degree polynomials and obtain a Prob-solvable loop.  This enables the automatic
computation of the moments in closed-form  at each loop iteration ($n$) using the approach proposed in~\cite{Bartoccietal2019}.

\paragraph{Related Work.} \cite{Srirametal2020} proposed polynomial forms to represent distributions of state variables.
Polynomial forms are generalizations of affine forms~\cite{Bouissouetal2016} and use the Taylor series expansion to represent functions of random variables generated in a PP. 
Functions can be only approximated in a small interval around a fixed point, otherwise high order derivatives are required to guarantee sufficient accuracy of the approximation. 
As a consequence, functions with unbounded support cannot be handled with this approach.

So-called Taylor models have been proposed in \cite{Makino2003,Neher2007,Chenetal12} for reachability analysis of (non-probabilistic) non-linear dynamical systems.
Taylor models combine polynomials and error intervals to capture the set of reachable states after some fixed time horizon.
Application of Taylor series expansions for generalized functions of probabilistic distributions can also be found in~\cite{Stankovic1996,Triebel2001}.

\cite{Jasouretal2021} introduced trigonometric and mixed-trigonometric-polynomial moments to obtain an exact description of the moments of uncertain states for nonlinear autonomous and robotic systems over the planning horizon. This approach can only handle systems encoded in PPs, where all nonlinear transformations take standard, trigonometric, or mixed-trigonometric polynomial forms.

Polynomial chaos expansion based methods have been extensively used for uncertainty quantification in engineering problems of solid and fluid mechanics (e.g. \cite{GhanemSpanos1991,Fooetal2007,Houetal2006}), computational fluid dynamics (e.g., \cite{Knio2006}), flow through porous media \cite{GhanemDham1998,Ghanem1998}, thermal problems
\cite{HienKleiber1997}, analysis of turbulent velocity
fields \cite{Chorin1974,MeechamJeng1968}, differential equations (e.g., \cite{WanKarniadakis2005,XiuKarniadakis2002a}), and, more recently, geosciences and meteorology (e.g., \cite{Formaggiaetal2013,Giraldietal2017,Denamieletal2020}).

\paragraph{Outline.}  Sec.~\ref{sec:preliminaries} reviews  the notion of \emph{Prob-solvable Loop} and the general theory of \emph{Polynomial Chaos Expansion} (PCE).  Sec.~\ref{sec:PCE_algo} presents our PCE algorithm and the conditions under which it produces accurate approximations to general random functions in probabilistic program loops.  Sec. \ref{sec:genfun} combines general PCE with Prob-solvable loops to automatically compute moments of all orders of state variables. There, we also characterize the structure a probabilistic program ought to have in order to be compatible with the Prob-solvable loops approach for computation of moments.  Sec. \ref{sec:evaluation} demonstrates  the accuracy and  feasibility of our approach on different benchmarks as compared with the state-of-the-art.
 We conclude in Sec. \ref{sec:conclusion}.

\section{Preliminaries}\label{sec:preliminaries}
\subsection{Prob-Solvable Loops}\label{sec:prob-solvable-loops}

\cite{Bartoccietal2019} defined the class of \emph{Prob-solvable loops} for which  moments of all orders of program variables can be computed symbolically: given a Prob-solvable loop and a program variable $x$, their method computes a closed-form solution for $\E(x_n^k)$ for arbitrary $k \in \nat$, where $n$ denotes the $n$th loop iteration.
Prob-solvable loops are restricted to polynomial variable updates.


\begin{definition}[Prob-solvable loops]\label{def:probsolvable}
Let $m \in \nat$ and $x_1,\ldots x_m$ denote real-valued program
variables. 
A Prob-solvable loop with program variables $x_1,\ldots x_m$ is a loop of the form
\begin{equation*}\label{eq:ProbModel}
  I; \texttt{while(true)} \{U\}, 
\end{equation*}
where
\begin{itemize}
    \item $I$ is a sequence of initial assignments over a subset of $\{x_1,\ldots, x_m\}$. The initial values of $x_i$ can be drawn from a known distribution. 
    They can also be real constants.
    \item $U$ is the loop body and is a sequence of $m$ random updates, each of the form:
    \begin{equation*}\label{eq:ProbModel:prob_assignments}
        x_i = \textit{Dist} \quad \text{or} \quad x_i := a x_i + P_{i}(x_1,\dots x_{i-1})
    \end{equation*}
    where $a \in \real$, $P_{i} \in \real[x_1,\ldots,x_{i-1}]$ is a polynomial over program
    variables $x_1,\ldots,x_{i-1}$ and \textit{Dist} is a distribution independent from program variables with computable moments.
\end{itemize}
\end{definition}


Many real-life systems exhibit non-polynomial dynamics and require more general updates, such as, for example, trigonometric or exponential functions.
In this work, we develop a method that allows approximation of non-polynomial assignments in probabilistic loops by polynomial assignments.
In doing so, we can use the methods for Prob-solvable loops to compute the moments for a broader class of stochastic systems.

The programming model we use (Definition~\ref{def:probsolvable}) is a simplified version of the Prob-solvable model as introduced in \cite{Bartoccietal2019}. 
Our approach, described in the following sections, is not limited to this simple fragment of the Prob-solvable and can be used for Prob-solvable loops as originally defined as well as other more general probabilistic loops.
The only requirement is that the loops satisfy the conditions in Section~\ref{sec:conditions}.

\subsection{Polynomial Chaos Expansion}\label{sec:PCE}

Polynomial chaos expansion
recovers a random variable in terms of a linear combination of functionals whose entries are known random variables, sometimes called germs, or, basic variables. 
Let $(\Omega, \Sigma, \pr)$ be a probability space, where $\Omega$ is the set of elementary events, $\Sigma$ is a $\sigma$-algebra of subsets of $\Omega$, and $\pr$  is a probability measure on $\Sigma$. 
Suppose $X$ is a real-valued random variable defined on $(\Omega, \Sigma, \pr)$, such that 
\begin{align}\label{l2fcn}
\E(X^2) &=\int_{\Omega} X^2(\omega) d\pr(\omega) < \infty.
\end{align}
The space of all random variables $X$  satisfying \eqref{l2fcn} is denoted by $L^2(\Omega, \Sigma, \pr)$. That is, the elements of $L^2(\Omega, \Sigma, \pr)$  are real-valued random variables defined on $(\Omega, \Sigma, \pr)$ with finite second moments. If we define  the inner product as
\begin{align}
    \E(XY)&=(X,Y)=\int_{\Omega} X(\omega) Y(\omega) d\pr(\omega) \label{innerprod}
\end{align}
and norm $||X||=\sqrt{\E(X^2)}=\sqrt{\int_{\Omega} X^2(\omega) d\pr(\omega)}$, then $L^2(\Omega, \Sigma, \pr)$ is a Hilbert space; i.e., an infinite dimensional linear space of functions endowed with an inner product and a distance metric.  
 Elements of a Hilbert space can be uniquely specified by their coordinates with respect to an orthonormal basis of functions, in analogy with Cartesian coordinates in the plane. Convergence with respect to $||\cdot||$ is called \textit{mean-square convergence}. A particularly important feature of a Hilbert space is that when the limit of a sequence of functions exists, it belongs to the space.


The elements in $L^2(\Omega, \Sigma, \pr)$ can be classified in two groups:  \textit{basic}  and   \textit{generic} random variables, which we want to decompose using the elements of the first set of basic variables. \cite{Ernstetal2012} showed that the basic random variables that can be used in the decomposition of other functions have finite moments of all orders with continuous probability density functions (pdf). 

The $\sigma$-algebra generated by the basic random variable $Z$ is denoted by $\sigma(Z)$. 
Suppose we restrict our attention to decompositions of a random variable $X=g(Z)$, where $g$ is a function with  $g(Z) \in L^2(\Omega, \sigma(Z),\pr)$ and the basic random variable $Z$ determines the class of orthogonal polynomials $\{\phi_i(Z), i \in \mathbb{N}\}$,
\begin{align}
\left<\phi_i(Z),\phi_j(Z)\right>&= \int_{\Omega} \phi_i(Z(\omega))\phi_j(Z(\omega)) d\pr(\omega) \notag \\
&=\int \phi_i(x)\phi_j(x) f_Z(x) dx=\begin{cases} 1 & i=j \\
0 & i \ne j \end{cases} \label{orth.poly}
\end{align}
which is a polynomial chaos  basis.
If $Z$ is normal with mean zero, the Hilbert space $L^2(\Omega, \sigma(Z),\pr)$ is called \emph{Gaussian} and the related set of polynomials is represented by the family of Hermite polynomials (see, for example, \cite{XiuKarniadakis2002a}) defined on the whole real line. Hermite polynomials form a basis of $L^2(\Omega, \sigma(Z),\pr)$. Therefore, every random variable $X$ with  finite second moment can be approximated  by the truncated PCE 
\begin{align}\label{trunc.PCE}
    X^{(d)}&=\sum_{i=0}^d c_i \phi_i(Z),
\end{align}
for suitable coefficients $c_i$ that depend on the random variable $X$. The truncation parameter $d$ is the highest polynomial degree in the expansion. Since the polynomials are orthogonal,
\begin{align}
    c_i&=\frac{1}{||\phi_i||^2} \left< X,\phi_i \right> =\frac{1}{||\phi_i||^2}\left<g,\phi_i\right> 
    =\frac{1}{||\phi_i||^2} \int_{\real} g(x)\phi_i(x) f_Z(x) dx.
\end{align}
The truncated PCE of $X$ in \eqref{trunc.PCE} converges in mean square to $X$ \cite[Sec. 3.1]{Ernstetal2012}. The first two moments of \eqref{trunc.PCE} are determined by
\begin{align}
    \E(X^{(d)})&= c_0,\\
    \var(X^{(d)})&= \sum_{i=1}^d c_i^2 ||\phi_i||^2.
\end{align}
Representing a random variable by a series of Hermite polynomials in a countable sequence
of independent Gaussian random variables is  known as Wiener–Hermite polynomial chaos expansion.  In applications of Wiener–Hermite PCEs, the underlying Gaussian Hilbert space is
often taken to be the space spanned by a  sequence $\{Z_i, i \in \mathbb{N}\}$ of independent standard Gaussian basic random variables, $Z_i \sim \Ncal(0, 1)$. For computational purposes, the
countable sequence $\{Z_i, i \in \mathbb{N}\}$ is restricted to a finite number $k \in \mathbb{N}$  of random variables. The Wiener–Hermite polynomial chaos expansion converges for 
random variables with finite second moment. Specifically, for any random variable $X \in L^2(\Omega, \sigma(\{Z_i, i \in \mathbb{N}\}), \pr)$, the approximation \eqref{trunc.PCE} satisfies
\begin{align}\label{convergence}
    X_k^{(d)} &\to X \quad \mbox{as } d,k \to \infty
\end{align} 
in mean-square convergence. 
The distribution of $X$ can be quite general; e.g., discrete, singularly continuous, absolutely continuous as well as of mixed type. 

\section{Polynomial Chaos Expansion Algorithm}\label{sec:PCE_algo}
\subsection{Random Function Representation}\label{sec:conditions}


In this section, we state the conditions under which the estimated polynomial is an unbiased and consistent estimator and has exponential convergence rate.\\
Suppose $k$ continuous random variables  $Z_{1},\ldots, Z_{k}$ are used to introduce stochasticity in a PP, with corresponding cumulative distribution functions (cdf) $F_{Z_i}$ for $i=1,\ldots,k$. Also, suppose all $k$ distributions 
have probability density functions, and let  $\Z=(Z_{1},\ldots,Z_{k})$ with cdf $F_{\Z}$. We assume that the elements of $\Z$ satisfy the following conditions:

\begin{itemize}
    \item[(A)] $Z_{i}, i=1,\ldots, k$, are independent.
    \item[(B)] We consider functions $g$ such that $g(\Z) \in L^2(\mathcal{Q}, F_\Z)$, where $\mathcal{Q}$ is the  support of the joint distribution of  $\Z=(Z_{1},\ldots,Z_{k})$ \footnote{$\Omega$ is dropped from the notation as the sample space is not important in our formulation.}.
    \item[(C)] All random variables $Z_{i}$ have distributions that are uniquely defined by their moments. \footnote{Conditions that ascertain this are given in Theorem 3.4 of \cite{Ernstetal2012}.}
\end{itemize}

Under condition (A),  the joint cdf of the components of $\Z$ is $F_{\Z}= \prod_{i=1}^k F_{Z_i}$. 
To ensure the construction of unbiased estimators with optimal exponential convergence rate (see \cite{XiuKarniadakis2002a}, \cite{Ernstetal2012}) in the context of probabilistic loops, we further introduce the following assumptions:

\begin{itemize}
    \item[(D)] $g$ is a function of a fixed number of basic variables (arguments) over all loop iterations.
    \item[(E)] If $\Z(j)=(Z_{1}(j),\ldots,Z_{k}(j))$ is the stochastic argument of $g$ at iteration $j$, then $F_{Z_i(j)}(x) = F_{Z_{i}(l)}(x)$ for all pairs of iterations $(j,l)$ and $x$ in the support of $F_{Z_i}$.
\end{itemize}

If Conditions (D) and (E) are not met, then the polynomial coefficients in the PCE need be computed for each loop iteration individually to ensure optimal convergence rate.
It is straightforward to show the following proposition.

\begin{proposition}\label{propos::func_props}
If functions $f$ and $g$ satisfy conditions (B) and (D), and $\Z=(Z_{1},\ldots, Z_{k_1})$, $\Y=(Y_{1},\ldots, Y_{k_2})$ satisfy conditions (A), (C) and (E) and are mutually independent, then their sum, $f(\Z) + g(\Y)$, and product, $f(\Z) \cdot g(\Y)$, also satisfy conditions (B) and (D).
\end{proposition}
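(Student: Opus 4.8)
The plan is to treat the two claimed conditions separately, handling (D) as a bookkeeping observation and (B) as the analytic core where mutual independence does the real work. For (D), I would argue directly from the hypothesis: since $f$ is a function of the fixed-size argument $\Z=(Z_1,\dots,Z_{k_1})$ and $g$ of $\Y=(Y_1,\dots,Y_{k_2})$, and each of $f$ and $g$ individually satisfies (D) (so $k_1$ and $k_2$ do not vary across loop iterations), the combined maps $f(\Z)+g(\Y)$ and $f(\Z)\cdot g(\Y)$ are functions of the $k_1+k_2$ basic variables $(\Z,\Y)$, a count that is independent of the iteration index. Hence (D) transfers immediately to both the sum and the product.

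For (B), the key preliminary step is to lift $f(\Z)$ and $g(\Y)$ from their respective marginal $L^2$ spaces into a common $L^2$ space over the joint support. Mutual independence of $\Z$ and $\Y$ gives the factorization $F_{\Z,\Y}=F_{\Z}\,F_{\Y}$ with joint support $\mathcal{Q}=\mathcal{Q}_{\Z}\times\mathcal{Q}_{\Y}$; by Tonelli's theorem the integral of $f(\Z)^2$ against $F_{\Z,\Y}$ equals its integral against $F_{\Z}$ alone, which is finite by (B) for $f$. Thus $f(\Z),\,g(\Y)\in L^2(\mathcal{Q},F_{\Z,\Y})$. The sum is then handled by the vector-space structure of $L^2$; for an explicit bound one expands $\E((f(\Z)+g(\Y))^2)=\E(f(\Z)^2)+2\,\E(f(\Z)g(\Y))+\E(g(\Y)^2)$ and controls the cross term by independence, $\E(f(\Z)g(\Y))=\E(f(\Z))\,\E(g(\Y))$, each factor being finite since $L^2\subseteq L^1$ on a probability space. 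Therefore $f(\Z)+g(\Y)\in L^2$, establishing (B) for the sum.

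The product case is where I expect the only genuine subtlety, and it is precisely here that mutual independence is indispensable. I would compute $\E((f(\Z)\cdot g(\Y))^2)=\E(f(\Z)^2\,g(\Y)^2)=\E(f(\Z)^2)\,\E(g(\Y)^2)$, the last equality following from independence, and this is finite by (B) applied to $f$ and to $g$. The main obstacle to flag is that this finiteness does \emph{not} follow from $f(\Z),g(\Y)\in L^2$ alone: absent independence the expectation $\E(f(\Z)^2g(\Y)^2)$ need not factor, and one would instead require the stronger integrability $f,g\in L^4$. Conditions (A), (C), and (E) are not needed for the $L^2$ closure argument itself; they enter only to guarantee that the combined variables remain a legitimate germ for the subsequent PCE, so I would note briefly that independence within $\Z$, within $\Y$, and across the two families is preserved by the mutual-independence hypothesis, but the substance of the proposition rests on the two moment computations above.
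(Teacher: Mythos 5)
Your proposal is correct and follows essentially the same route as the paper's own argument: both lift $f(\Z)$ and $g(\Y)$ to the common product space $\mathcal{Q}_{\Z}\times\mathcal{Q}_{\Y}$, establish (B) for the product via the factorization $\E\left(f(\Z)^{2}g(\Y)^{2}\right)=\E\left(f(\Z)^{2}\right)\E\left(g(\Y)^{2}\right)$ from mutual independence, and handle the sum by expanding the square and bounding the cross term using independence together with the inclusion $L^{2}\subseteq L^{1}$. Your treatment of (D) and your remark that the product case genuinely requires independence (rather than mere $L^{2}$ membership) are both consistent with, and slightly more explicit than, the paper's version.
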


\begin{figure}[t!]
\resizebox{11cm}{11cm}{%
\hspace{-0.5cm}
\begin{tikzpicture}[node distance=2cm]
\node (start) [startstop] {Start};
\node (in1) [io] {\textbf{Input:}
    $\begin{cases}
      Z_{1}, \hspace{0.5cm} \bar{d}_{1},\\
      Z_{2}, \hspace{0.5cm} \bar{d}_{2},\\
      ...\\
      Z_{k}, \hspace{0.5cm} \bar{d}_{k},\\
    \end{cases}$\vspace{0.5cm}
    
Set of random variables and the highest degrees of the corresponding  univariate orthogonal polynomials

};
    
\node (in2) [io, right of=in1, xshift = 4cm] {\textbf{Input}
    $\begin{cases}
      f_{1}(z_{1}), \hspace{0.5cm} \left[a_{1}, b_{1}\right],\\
      f_{2}(z_{2}), \hspace{0.5cm} \left[a_{2}, b_{2}\right],\\
      ...\\
      f_{k}(z_{k}), \hspace{0.5cm} \left[a_{k}, b_{k}\right],\\
    \end{cases}$\vspace{0.5cm}
Set of the density functions and the corresponding supports
};

\node (in3) [io, right of=in2, xshift = 4cm] {\textbf{Input} \\\vspace{0.1cm}
$g(z_{1}, ..., z_{k})$\vspace{0.5cm}

Target function

};

\node (pro1) [process, below of=in2, yshift = -3cm] {\textbf{Orthonormal polynomials:}\\
\vspace{0.2cm}
$\begin{cases}
    \{\bar{p}_{1}^{deg}\}_{deg=0}^{\bar{d}_{1}}\\
    ...\\
    \{\bar{p}_{k}^{deg}\}_{deg=0}^{\bar{d}_{k}}\\
\end{cases}$

};

\node (pro2) [process, below of=in1, yshift = -12.7cm] {
\[
\D^{L \times k} = 
\begin{blockarray}{ccccc}
Z_{1} & Z_{2} & Z_{3} & ... & Z_{k} \\
\begin{block}{(ccccc)}
  0 & 0 & 0 & 0 & 0 \\
  1 & 0 & 0 & 0 & 0 \\
  \vdots & \vdots & \vdots  & \ddots & \vdots\\
  \bar{d}_{1} & \bar{d}_{2} & \bar{d}_{3} & \cdots & \bar{d}_{k} \\
\end{block}
\end{blockarray}
 \]
};

\draw [arrow, line width=1mm, color = green!50, text = black] (in2) -- node[anchor=east] {\begin{center}
    \textbf{1: Gram-Schmidt Process}
\end{center}} (pro1);
\draw [arrow, line width=1mm, color = green!90] (in1) |-  (pro1);

\begin{scope}[transform canvas={xshift=-1cm}]
  \draw [arrow, line width=1mm, color = brown!90, text = black] (in1) -- node[anchor=east] {\vspace{5cm}
\begin{center}
    \textbf{2:  Matrix of\\  polynomials' combinations}
\end{center}: } (pro2);
\end{scope}

\node (pro3) [process, below of=in3, yshift = -6.5cm] {\textbf{Fourier coefficients}:\\
$c_{j} = \int\limits_{a_{1}}^{b_{1}}...\int\limits_{a_{k}}^{b_{k}}g(z_{1}, ...z_{k})\prod\limits_{i=1}^{k}\left[f_{Z_{i}}(z_{i})\bar{p}^{d_{ji}}_{i}(z_{i})\right]dz_{k}...d_{z_{1}}$

};

\draw [arrow, line width=1mm, color = blue!90, text = black] (in3) --  node[anchor=west] {\vspace{3cm}
\begin{center}
    \textbf{3: \textbf{Calculation of\\ coefficients}}
\end{center} } (pro3);

\begin{scope}[transform canvas={xshift=1cm}]
  \draw [arrow, line width=1mm, color = blue!90, text = black, shorten <= -1cm] (pro2) -| (pro3);
\end{scope}

\draw [arrow, line width=1mm, color = blue!90, text = black] (pro1) -- (pro3);

\node (out1) [io, below of=pro1, yshift=-5cm] {\textbf{Output:} \\
\vspace{0.2cm}
$\sum\limits_{j} c_{j}\prod\limits_{i} \bar{p}_{i}^{d_{ji}}$
};

\draw [arrow, line width=1mm] (pro3) |- (out1);

\draw [arrow, line width=1mm] (pro1) -- node[anchor=west] {\vspace{3cm}
\begin{center}
    \textbf{4: \textbf{Summation of weighted polynomials}}
\end{center} } (out1);
\end{tikzpicture}
}
\caption{Illustration of PCE algorithm}
\label{fig::PCE_Algo}
\end{figure}
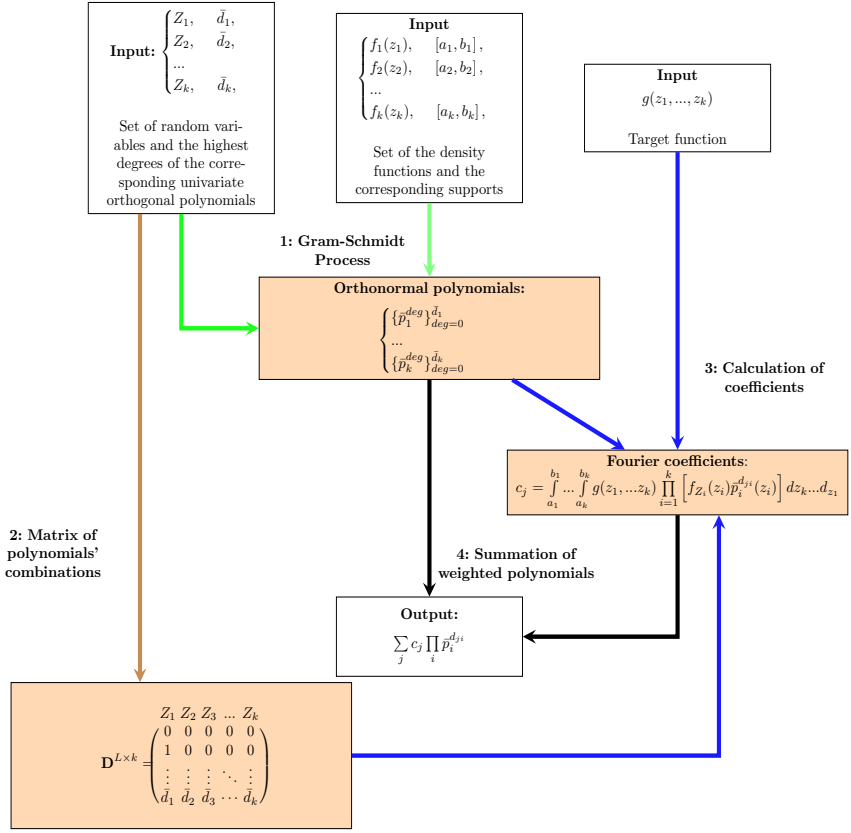

\subsection{PCE Algorithm}\label{sec:PCEalgo}

Let $Z_1,\ldots,Z_k$ be independent continuous random variables, with respective cdfs $F_i$, satisfying conditions (A), (B) and (C), and $\Z=(Z_1,\ldots,Z_k)^T$ with cdf $\F = \prod_{i=1}^k F_{i}$ and support $\mathcal{Q}$.
The function $g: \real^{k} \to \real$, with $g \in L^2(\mathcal{Q}, \F)$ can be approximated  with the truncated orthogonal polynomial expansion, as described in  Fig. \ref{fig::PCE_Algo}, 
\begin{align}\label{polyexpanse}
g(\Z) &\approx \hat{g}(\Z) = \sum_{\substack{d_i \in \{0,\ldots,\bar{d}_i\},\\ i=1,\ldots, k}} c(d_1,\ldots,d_k) z_1^{d_1}  \cdots z_k^{d_k} 
= \sum_{j=1}^L c_{j}\prod\limits_{i=1}^{k}\bar{p}_{i}^{d_{ji}}(z_{i}), 
\end{align}
where 
\begin{itemize}
 \item $\bar{p}_{i}^{d_{ji}}(z_{i})$ is a polynomial of degree $d_{ji}$, and belongs to the set of orthogonal polynomials with respect to $F_{Z_i}$ that are calculated with the Gram-Schmidt orthogonalization procedure\footnote{Generalized PCE typically entails using orthogonal basis polynomials specific to the distribution of the basic variables, according to the Askey scheme of \cite{XiuKarniadakis2002a,Xiu2010}. We opted for the most general procedure that can be used for any basic variable distribution.};
    \item $\bar{d}_{i} = \max\limits_{j}(d_{ji})$ is the highest degree of the univariate orthogonal polynomial, for $i=1,\ldots,k$; 
    \item $L = \prod\limits_{i=1}^{k}(1 + \bar{d}_{i})$ is the total number of multivariate orthogonal polynomials and equals the truncation constant;
    \item $c_{j}$ are the Fourier coefficients.
\end{itemize}

The Fourier coefficients are calculated using
\begin{align}\label{coef}
c_{j} &= \int\limits_{\mathcal{Q}}g(z_{1},...,z_{k})p_{i}^{d_{ji}}(z_{i})d\F = 
\idotsint\limits_{\mathcal{Q}} g(z_{1},...,z_{k})\prod\limits_{i=1}^{k}\bar{p}_{i}^{d_{ji}}(z_{i})dF_{Z_k}...dF_{Z_1},
\end{align}
by Fubini's theorem. An example with calculation details of the coefficients in~\eqref{coef} is given in Appendix \ref{app:B}.

\begin{example}
Returning to the Turning Vehicle model in Fig.~\ref{fig:turningexample}, 
the non-polynomial functions to approximate are $g_1 = cos$ and $g_2 = sin$ from the updates of program variables $y, x$, respectively. In both cases, we only need to consider a single basic random variable, $Z \sim \Ncal(0,0.01)$ ($\psi$ in Fig.~\ref{fig:turningexample}). 

For the approximation, we use polynomials of degree up to $5$. 
Eqn. \eqref{polyexpanse} has the following form for the two functions,
\begin{equation}\label{eqn:g_1}
    \hat{g}_1(z) = cos(\psi) = a_0 + a_1\psi + ... + a_5\psi^5
\end{equation}
and
\begin{equation}\label{eqn:g_2}
    \hat{g}_2(z) = sin(\psi) = b_0 + b_1\psi + ... + b_5\psi^5.
\end{equation}
We compute the coefficients $a_i, b_i$ in equations~\eqref{eqn:g_1}-\eqref{eqn:g_2} using~\eqref{coef} to obtain the values shown in Fig.~\ref{fig:turningexample}.
\end{example}

\paragraph{Complexity.} 

Assuming the expansion is carried out up to the same polynomial degree $d$ for each basic variable, $\bar{d}_{i} = d$, $\forall i=1,...,k$. This implies $d=\sqrt[k]{L}-1$. The complexity of the scheme is $\mathcal{O}(sd^{2}k + s^{k}d^{k})$, where $\mathcal{O}(s)$ is the complexity of computing univariate integrals.

The complexity of our approximation scheme is comprised of two parts:
(1) the orthogonalization process and (2) the calculation of coefficients. Regarding (1), we orthogonalize and normalize $k$ sets of $d$  basic linearly independent polynomials during the Gram-Schmidt process.
For degree $d{=}1$, we need to calculate one integral, the inner product with the previous polynomial.
Additionally, we need to compute one more integral, the norm of itself (for normalization).
For each subsequent degree $d'$, we must calculate $d'$ additional new integrals. The computation of each integral has complexity $\mathcal{O}(s)$.
Regarding (2), the computation of the coefficients requires  calculating $L{=}(d{+} 1)^{k}$ integrals with $k$-variate functions as integrands.

\medskip

We define the approximation error  to be
\begin{equation}\label{error_std}
    se(\hat{g}) = \sqrt{\int\limits_{\mathcal{Q}}\left(g(z_{1},...,z_{k}) - \hat{g}(z_{1},...,z_{k})\right)^{2}dF_{Z_1}\ldots dF_{Z_k}}
\end{equation}
since $\E(\hat{g}(Z_{1},...,Z_{k}))=g(Z_1,\cdots,Z_k)$ by construction.

Appendix \ref{app:C} contains  examples of polynomial chaos expansion for functions of up to three variables.
The implementation of this algorithm may become challenging when the random functions  have complicated forms and  the number of parametric uncertainties is large. In this case, the calculation of the PCE coefficients involves high dimensional integration, which may prove difficult and time prohibitive for real-time applications \cite{SonDu2020}.

\section{Prob-Solvable Loops for General Non-Polynomial Functions}\label{sec:genfun}

PCE allows incorporating non-polynomial updates into Prob-solvable loop programs and use the algorithm in \cite{Bartoccietal2019} and exact tools, such as \emph{Polar}~\cite{Moosbruggeretal2022}, for moment (invariant) computation.
We identify two classes of programs based on how the distributions of the random variables generated by the programs vary.

\subsection{	Iteration-Stable Distributions of Random Arguments}\label{sec:stable} 

Let $\Pcal$ be an arbitrary Prob-solvable loop and suppose that a (non-basic) state variable $x \in \Pcal$ has a non-polynomial $L^2$-type update $g(\Z)$, where $\Z = (Z_{1}, ..., Z_{k})^{T}$ is a vector of (basic) continuous, independent, and identically distributed random variables \textit{across iterations}. That is, if $f_{Z_j(n)}$ is the pdf  of the random variable $Z_j$ in iteration $n$, then  $f_{Z_j( n)} \equiv f_{Z_j(n')}$, for all iterations $n, n'$ and $j=1,\ldots,k$. The basic random variables $Z_1,\ldots, Z_k$ and the update function $g$ satisfy conditions (A)--(E) in Section~\ref{sec:conditions}.
For the class of Prob-solvable loops where all variables with non-polynomial updates satisfy these conditions,  the computation of the Fourier coefficients in the PCE approximation \eqref{polyexpanse} can be carried out as explained in Section~\ref{sec:PCEalgo}.
In this case, the convergence rate is optimal.

\subsection{Iteration Non-Stable Distribution of Random Arguments}\label{sec:non-stable} 

Let $\Pcal$ be an arbitrary Prob-solvable loop and suppose that a state variable $x \in \Pcal$ has a non-polynomial $L^2$-type update $g(\Z)$, where $\Z = (Z_{1}, ..., Z_{k})^{T}$ is a vector of continuous independent  but \textit{not necessarily identically} distributed random variables across iterations. 
For this class of Prob-solvable loops, conditions (A)--(C) in Section~\ref{sec:conditions} hold, but (D) and/or (E) may not be fulfilled.
In this case, we can ensure optimal exponential convergence by fixing the number of loop iterations.
For unbounded loops, we describe an approach converging in mean-square and establish its convergence rate next.

\paragraph{Conditional estimator given number of iterations.}

Let $N$ be an a priori fixed finite integer, representing the maximum iteration number. The set $\{1, ..., N\}$ is a finite sequence of iterations for the Prob-solvable loop $\Pcal$. 

Iterations are executed sequentially for $n =1,\ldots, N$, which allows  the estimation of  the final functional that determines the target state variable at each iteration $n \in \{1, ..., N\}$ and its set of supports. Knowing these features, we can carry out $N$ successive expansions. Let $P(n)$ be a PCE of $g(\Z)$ for iteration $n$. We introduce an additional program variable $c$ that counts the loop iterations.
The variable $c$ is initialized to $0$ and incremented by $1$ at the beginning of every loop iteration.
The final estimator of  $g(\Z)$ can be represented as 
\begin{equation}\label{eq:cond-estimator}
    \hat{g}(\Z) = \sum\limits_{n = 1}^{N}P(n)\left[\prod\limits_{j = 1, j \neq n}^{N}\frac{(c - j)}{n - j}\right].
\end{equation}
Replacing non-polynomial functions with \eqref{eq:cond-estimator} results in a program with only polynomial-type updates and \textit{constant} polynomial structure; that is, polynomials with  coefficients that remain constant across iterations. Moreover, the estimator is unbiased with optimal exponential convergence on the set of iterations $\{1, ..., N\}$ \cite{XiuKarniadakis2002a}.

\paragraph{Unconditional estimator.}\label{sec:uncond} 
Here the iteration number is unbounded. Without loss of generality, we consider a single basic random variable $Z$; that is, $k{=}1$.  The function  $g(Z)$ is scalar valued and can be represented as a polynomial of \textit{nested} $L^2$ functions, which depend on polynomials of the argument variable. Each nested functional argument is expressed as a sum of orthogonal polynomials yielding  the final estimator, which is itself a polynomial.

Since PCE converges to the function it approximates in mean-square (see \cite{Ernstetal2012}) 
on the whole interval (argument's support),  PCE converges on any sub-interval of the support of the argument in the same sense.


Let us consider a function $g$ with sufficiently large domain, and a random variable $Z$ with known distribution and support. For example, $g(Z) = e^{Z}$, with $Z \sim N(\mu, \sigma^{2})$. The domain of $g$ and the support of $Z$ are the real line.
We can expand $g$ into a PCE with respect to the distribution of  $Z$ as 
\begin{align}\label{truef}
 g(Z) &= \sum\limits_{i=0}^{\infty}c_{i}p_{i}(Z).
\end{align}
The distribution of $Z$ is reflected in the polynomials in \eqref{truef}. Specifically,  $p_{i}$, for $i=0,1,\ldots$, are  Hermite polynomials of special type in that they are orthogonal (orthonormal) with respect to $N(\mu, \sigma^{2})$. They also form an  orthogonal basis of the space of $L^2$ functions. Consequently, any function in $L^2$ can be estimated arbitrarily closely by these polynomials.
In general, any continuous distribution with finite moments of all orders and sufficiently large support can also be used as a model for basic variables in order to construct a basis for $L^2$ (see \cite{Ernstetal2012}).

Now suppose that the distribution of the  underlying variable $Z$ is unknown with pdf $f(Z)$ that is continuous on its support $\left[a, b \right]$. Then, there exists another basis of polynomials, $\{q_{i}\}_{i=0}^{\infty}$, which are orthogonal on the support $\left[a, b \right]$ with respect to the pdf $f(Z)$. Then, on the interval $\left[a, b \right]$,  $g(Z) = \sum_{i=0}^{\infty}k_{i}q_{i}(Z)$, and $\mathbb{E}_{f}\left[g(Z)\right] = \mathbb{E}_{f}\left[\sum_{i=0}^{M}k_{i}q_{i}(Z) \right]$, $\forall M \geq 0$.

Since  $\left[a, b \right] \subset \real$,  the expansion $\sum_{i=0}^{\infty}c_{i}p_{i}(Z)$ converges  in mean-square to $g(Z)$ on $\left[a, b \right]$. In the limit, we have  $g(Z) = \sum_{i=0}^{\infty}c_{i}p_{i}(Z)$ on the interval $\left[a, b \right]$. Also, $\mathbb{E}_{f}\left(g(Z)\right) = \mathbb{E}_{f}\left(\sum_{i=0}^{\infty}c_{i}p_{i}(Z)\right)$ for the true pdf $f$ on $\left[a, b \right]$.
In general, though, it is not true that $\mathbb{E}_{f}\left(g(Z)\right) = \mathbb{E}_{f}\left(\sum_{i=0}^{M}c_{i}p_{i}(Z)\right)$ for any arbitrary $M \geq 0$ and any pdf $f(Z)$ on $\left[a, b \right]$, as the estimator is biased.

To capture this discrepancy, we define the approximation error as
\begin{align}\label{approxer}
  e(M) &= \mathbb{E}_{f}\left[ g(Z) -  \sum\limits_{i=0}^{M}c_{i}p_{i}(Z) \right]^{2} = \mathbb{E}_{f}\left[ \sum\limits_{i=M+1}^{\infty}c_{i}p_{i}(Z) \right]^{2}.
\end{align}
\paragraph{Computation of error bound.}

Assume the true pdf $f_Z$ of $Z$ is supported on $\left[ a, b\right]$. Also, assume the domain of $g$ is $\mathbb{R}$. The random function $g(Z)$ has PCE  on the whole real line based on Hermite polynomials $\{p_{i}(Z)\}_{i=0}^{\infty}$ that are orthogonal with respect to the standard normal pdf $\phi$. The truncated  expansion estimate of \eqref{truef} with respect to a normal basic random variable is
\begin{align}
    \hat{g}(Z)&=  \sum\limits_{i=0}^{M}c_{i}p_{i}(Z). \label{fapprox}
\end{align} 
We compute an upper-bound for the approximation error for our scheme in Theorem~\eqref{err-bound}.

\begin{theorem}\label{err-bound}
Suppose $Z$ has density $f$ supported on $[a,b]$, $g: \real \to \real$ is in $L^2$, and $\phi$ denotes the standard normal pdf. Under \eqref{truef} and \eqref{fapprox},
\begin{align}
   \left \| g(Z) - \hat{g}(Z) \right \|_{f}^{2}&= \int_a^b \left(g(z)-\hat{g}(z)\right)^2 f_Z(z) dz \notag \\
   & \qquad \le   \left( \frac{2}{\min{(\phi(a), \phi(b))}} + 1 \right) \var_{\phi}\left(g(Z)\right). \label{bound}
\end{align}
\end{theorem}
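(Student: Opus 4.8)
The plan is to work throughout with the Hermite tail $R := g - \hat{g} = \sum_{i>M} c_i p_i$, so that the target quantity is simply $\|g-\hat{g}\|_f^2 = \int_a^b R(z)^2 f_Z(z)\,dz$. Before touching the $f$-integral I would record two facts about the \emph{reference} Gaussian weight. First, since the $p_i$ are orthonormal with respect to $\phi$ and $M \ge 0$, the tail omits $c_0 = \E_\phi(g)$, so Parseval gives $\int_{\real} R^2\phi\,dz = \sum_{i>M} c_i^2 \le \sum_{i\ge 1} c_i^2 = \var_\phi(g(Z))$; this is the term that will contribute the additive ``$+1$''. Second, $\phi$ is unimodal with mode at $0$, hence on any interval its minimum is attained at an endpoint, giving $\phi(z) \ge m := \min(\phi(a),\phi(b))$ for every $z\in[a,b]$. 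This endpoint minimum is exactly what must produce the factor $2/\min(\phi(a),\phi(b))$, so it is worth isolating at the outset.

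The main step is to split the error according to the discrepancy between the true density $f$ and the Gaussian weight on $[a,b]$:
\[
\int_a^b R^2 f_Z\,dz = \int_a^b R^2 \phi\,dz + \int_a^b R^2 (f_Z-\phi)\,dz .
\]
The first summand is immediate: it is at most $\int_{\real} R^2\phi\,dz \le \var_\phi(g(Z))$ by the Parseval fact above, which delivers the additive $1$. Everything then rests on the second, ``discrepancy'' summand, and on extracting from it the factor $2/m$. For that term my plan is to separate the size of $R$ from the size of the signed measure $f_Z-\phi$: bound $\int_a^b R^2 (f_Z-\phi)\,dz \le \int_a^b R^2 |f_Z-\phi|\,dz$, use $\phi \ge m$ on $[a,b]$ to pass from the Lebesgue reference back to the Gaussian one, and use the total-variation-type estimate $\int_a^b |f_Z-\phi|\,dz \le \int_a^b f_Z\,dz + \int_a^b \phi\,dz \le 2$ to supply the factor $2$. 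Recombining the two contributions is then meant to yield $\bigl(1 + 2/m\bigr)\var_\phi(g(Z))$.

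I expect this discrepancy term to be the crux, and I want to be candid about where the difficulty lies. Although $f_Z$ is continuous on $[a,b]$, it may be sharply peaked, so one cannot dominate $f_Z$ by a constant multiple of $\phi$ pointwise, nor pass through $\sup_{[a,b]} R^2$ (for the Hermite tail this blows up with $M$). The argument must therefore buy the transfer between the $f$-weight and the $\phi$-weight using \emph{only} $\int_a^b f_Z\,dz = 1$ together with the uniform lower bound $\phi \ge m$, and it is precisely the interplay of these two facts---rather than any bound on $\sup f_Z$---that has to be arranged so the constant collapses to $2/m+1$ independently of the shape of $f_Z$ and of the truncation level $M$. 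The hard part is making the weight-transfer estimate in the discrepancy term tight enough that the factor $1/m$ comes from $\phi\ge m$ at the endpoints and the factor $2$ from $\int_a^b|f_Z-\phi|\le 2$, without incurring any further dependence on $f_Z$; getting exactly these two contributions, and no more, is the step I would spend the most care on.
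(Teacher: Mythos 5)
Your plan follows the paper's proof almost line for line: split the $f$-weighted error as $\int_a^b R^2\phi\,dz+\int_a^b R^2(f-\phi)\,dz$, bound the first summand by $\sum_{i>M}c_i^2\le\var_\phi(g(Z))$ via orthonormality of the $p_i$ under $\phi$ (this is the paper's term $C$, extended to $A+B+C$ over the whole line), and then try to fold the discrepancy term back into the Gaussian-weighted tail using exactly the two ingredients the paper uses, namely $\phi\ge m:=\min(\phi(a),\phi(b))$ on $[a,b]$ and $\int_a^b(f+\phi)\,dz=1+\Phi(b)-\Phi(a)<2$. So the route is the same, and you have correctly located the step on which everything hinges.

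That step, however, is where your proposal stops, and the inequality it would need —
\[
\int_a^b R^2\,|f-\phi|\,dz \;\le\; \Bigl(\int_a^b R^2\,dz\Bigr)\Bigl(\int_a^b |f-\phi|\,dz\Bigr)
\]
— is not a valid general fact: for nonnegative integrands the integral of a product is not dominated by the product of the integrals (take both factors to be a common spike of height $n$ on an interval of length $1/n$). The paper's own proof of the bound on $D$ performs precisely this factorization, passing from $\int_a^b R^2(f-\phi)\,dz$ to $\bigl(\int_a^b R^2\,dz\bigr)\bigl(\int_a^b(\phi+f)\,dz\bigr)$ before applying $\phi\ge m$, so you have reproduced the argument including its unjustified step rather than resolved it. Your own diagnosis explains why no easy repair is available along this route: $f$ cannot be dominated pointwise by a constant multiple of $\phi$, and $\sup_{[a,b]}R^2$ is not controlled by $\var_\phi(g)$ uniformly in $M$, so the only honest way to separate "size of $R$" from "size of $f-\phi$" would require an extra hypothesis (e.g.\ a bound on $\sup_{[a,b]}f$, which immediately gives $\int_a^b R^2 f\,dz\le \|f\|_\infty\, m^{-1}\int_a^b R^2\phi\,dz$, but with a constant depending on $f$). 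As written, the discrepancy term is not shown to be $\le (2/m)\var_\phi(g(Z))$, so the proposal does not constitute a complete proof; the difficulty you flagged as "the step I would spend the most care on" is a genuine gap, not a routine verification.
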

The proof is given in the Appendix.

\medskip
The upper bound in \eqref{bound} depends only on the support of $f$, the pdf of~$Z$, and the function $g$. 
If $Z$ is standard normal ($f=\phi$), then the upper bound in \eqref{bound} equals $\var_\phi(g(Z))$.

\begin{remark}
The approximation error inequality in \cite[Lemma 1]{Muehlpfordtetal2018}, 
\begin{equation}\label{muelpford}
\left \|{ g(Z) - \sum _{i=0}^{T} c_{i}  p_{i}(Z) }\right \| \leq \frac {\| g(Z)^{(k)} \|}{ \prod _{i=0}^{k-1} \sqrt {T-i+1}}, 
\end{equation}
is a special case of  Theorem \ref{err-bound} when $Z \sim \mathcal{N}(0,1)$ and $f=\phi$, and the  polynomials $p_i$ are Hermite. In this case, the left hand side of \eqref{muelpford} equals  
$\sqrt{\sum_{i=n+1}^{\infty}c_{i}^{2}}$.
\end{remark}

Although Theorem~\ref{err-bound} is restricted to distributions with bounded support, the approximation in  \eqref{fapprox}  also converges for distributions with unbounded support.

\section{Evaluation}\label{sec:evaluation}

In this section, we evaluate our approach on four benchmarks from the literature.
We use our method based on PCE to approximate non-polynomial functions.
After PCE, all benchmark programs fall into the class of Prob-solvable loops.
We use the static analysis tool \textit{Polar}~\cite{Moosbruggeretal2022} on the resulting Prob-solvable loops to compute the moments of the program variables parameterized by the loop iteration $n$.
All experiments were run on a machine with 32 GB of RAM and a 2.6 GHz Intel i7 (Gen 10) processor.

\begin{figure}[!t]
\centering
  \includegraphics[scale=.55]{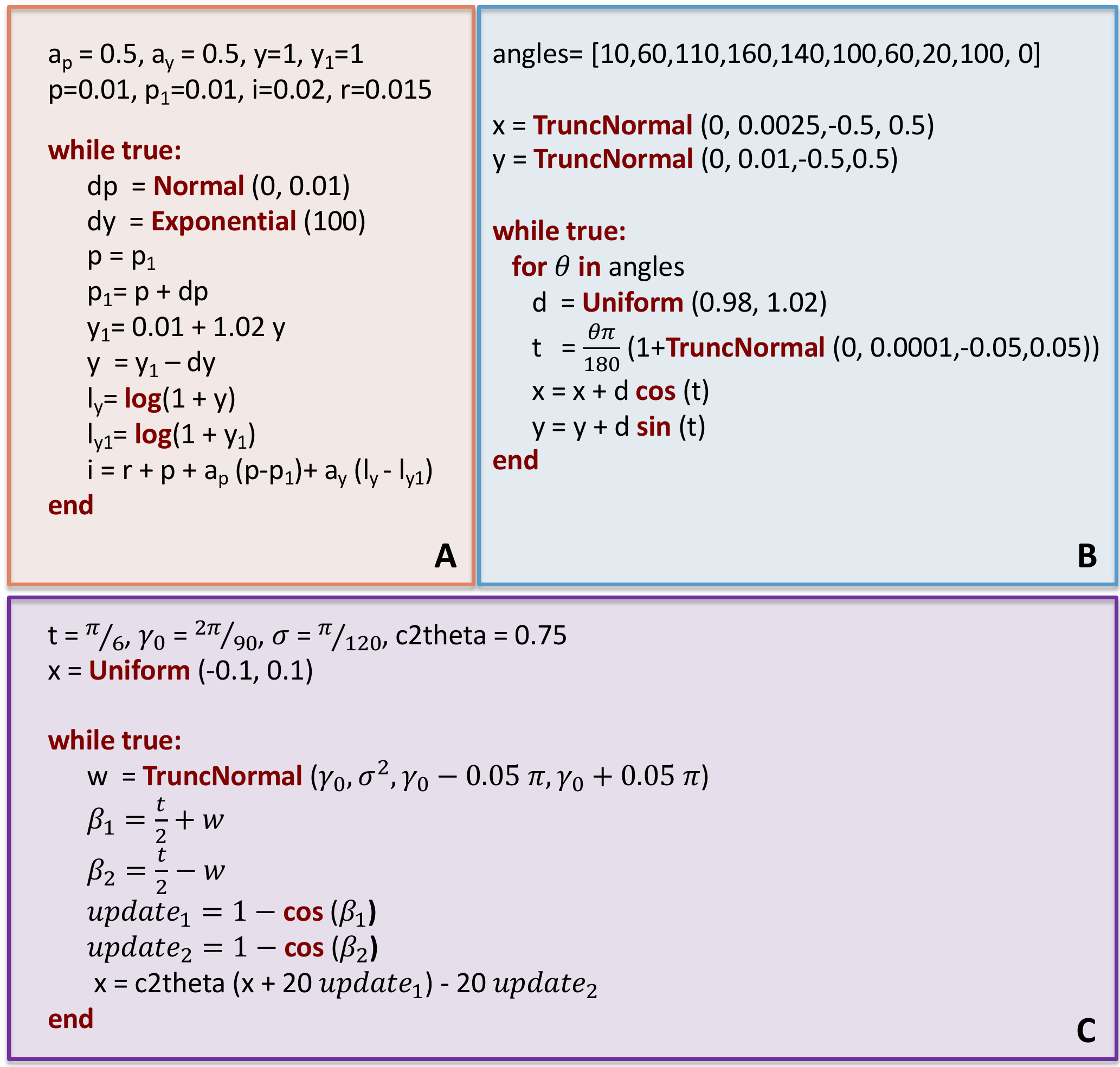}
  \caption{Probabilistic loops: (A) Taylor rule~\cite{Taylor1993}, (B) 2D Robotic Arm~\cite{Bouissouetal2016} (in the figure we use the inner loop as syntax sugar to keep the program compact), (C) Rimless Wheel Walker~\cite{SteinhardtT12}. } 
  \label{fig:otherexamples}
\end{figure}

\bigskip

{\bf Taylor Rule Model.}
Central banks set monetary policy by raising or lowering their target for the federal funds rate.
The Taylor rule\footnote{It was proposed by the American economist John B. Taylor as a technique to stabilize economic activity by setting an interest rate \cite{Taylor1993}.} is an equation intended to describe the interest rate decisions of central banks.
The  rule relates the target of the federal funds rate to the current state of the economy through the formula
\begin{equation*}
    i_{t} = r^{*}_{t} + \pi_{t} + a_{\pi}(\pi_{t} - \pi^{*}_{t}) + a_{y}(y_{t} - \bar{y}_{t}),
\end{equation*}
where $i_{t}$ is the nominal interest rate, $r^{*}_{t}$ is the equilibrium real interest rate, $r^{*}_{t} = r$, $\pi_{t}$ is inflation rate at  $t$, $\pi^{*}_{t}$ is the short-term target inflation rate at $t$, $y_{t} = \log(1 + Y_{t})$, with $Y_{t}$  the real GDP, and $\bar{y}_{t} = \log(1 + \bar{Y}_{t})$, with $\bar{Y}_{t}$ denoting the potential real output. 

Highly-developed economies grow exponentially with a sufficiently small rate (e.g., according to the World Bank,\footnote{\url{https://data.worldbank.org/indicator/NY.GDP.MKTP.KD.ZG?locations=US}} the average growth rate of the GDP in the USA in 2001-2020 equals to 1,73\%).
Therefore, we set the growth rate of the potential output to 2\%.
Moreover, we follow \cite{Atkeson_Ohanian_2001} and model inflation as a martingale process; that is, $\E_{t}\left[\pi_{t+1}\right] = \pi_{t}$.
The Taylor rule model is described by the program in Fig.~\ref{fig:otherexamples}, A.

Fig. \ref{fig:aside} illustrates the performance of our approach as a function of the polynomial degree of our approximation. The approximations to the true first moment (in red) are plotted in the left panel and the relative errors for the first and second moments are in the middle and the right panels, respectively, over iteration number. The $y$-axis in both middle  and right panels shows relative errors calculated as $rel.err = |est - true|/true.$
All plots show that the approximation error is low and that it deteriorates as the polynomial degree increases from 3 to 9, across iterations.  The drop is sharper for the second moment.

\begin{table}[!t]
\resizebox{5.0in}{0.8in}{
        \centering
        \begin{tabular}{ccc}
            \includegraphics[scale=.8]{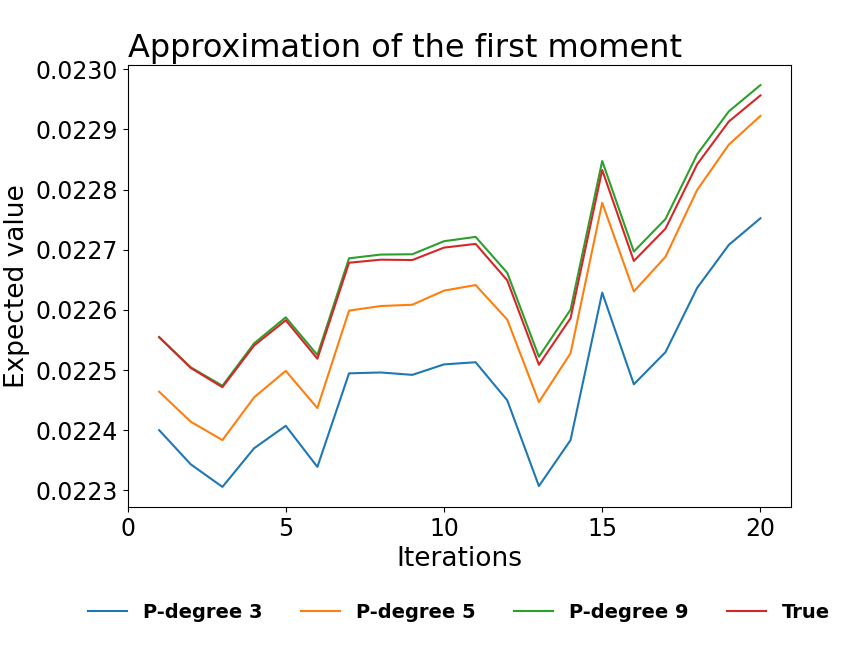} & \includegraphics[scale=.8]{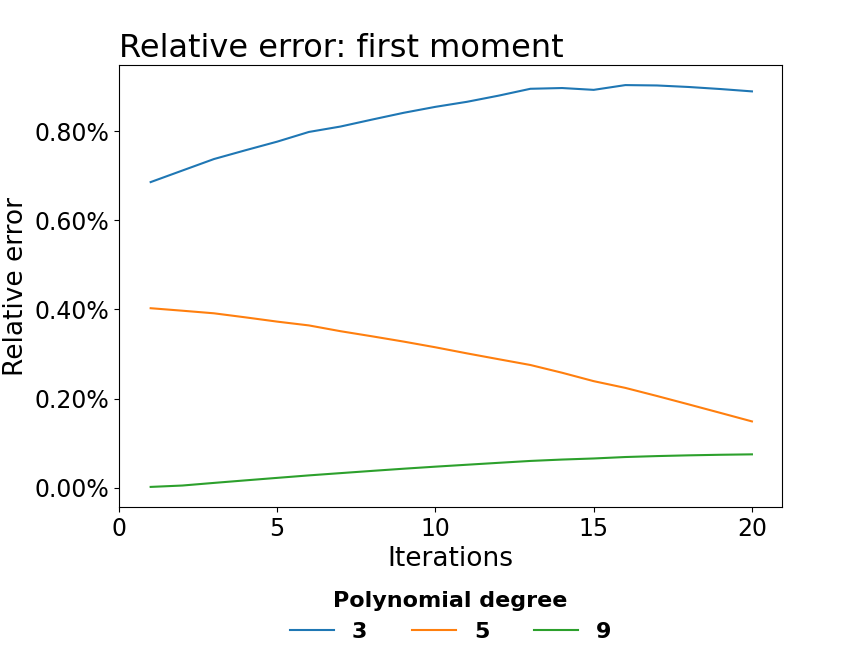} & 
            \includegraphics[scale=.8]{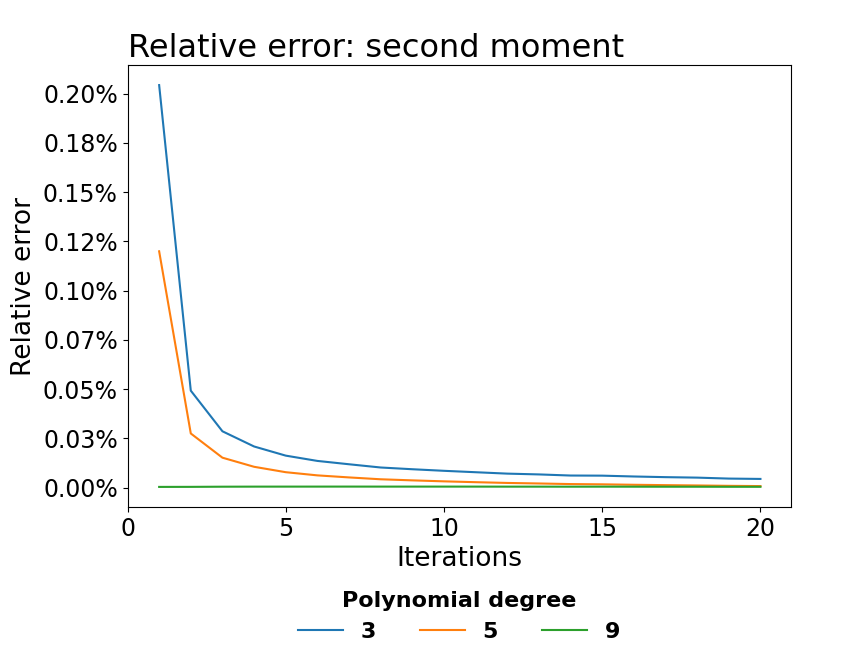} 
        \end{tabular}

}
        
        \captionof{figure}{The approximations and their relative errors for the Taylor rule model.} 
        \label{fig:aside}

    \end{table}

\bigskip

{\bf Turning Vehicle Model.}
The Turning vehicle model is described by the program in Fig.~\ref{fig:turningexample}.
The model was introduced in \cite{Srirametal2020} and depicts the position of a vehicle, as follows. 
The state variables are $(x, y, v, \psi)$, where $(x, y)$ is the vehicle's position with velocity $v$ and yaw angle $\psi$.
The vehicle’s velocity is stabilized around $v_{0} = 10$ m/s.
The dynamics are modelled by the equations  $x(t{+}1) = x(t) + \tau  v  \cos(\psi(t))$, $y(t{+}1) = y(t) + \tau v \sin(\psi(t))$, $v(t{+}1) = v(t) + \tau (K(v(t) - v_{0}) + w_{1}(t{+}1))$, and $\psi(t{+}1) = \psi(t) + w_{2}(t{+}1)$.
The disturbances $w_{1}$ and $w_{2}$ have distributions $w_{1} \sim U[-0.1, 0.1]$, $w_{2} \sim N(0, 0.1)$.
Moreover, as in \cite{Srirametal2020}, we set $K = -0.5$.
Initially, the state variables are distributed as:
$x(0) \sim U[-0.1, 0.1]$, $y(0) \sim U[-0.5, -0.3]$, $v(0) \sim U[6.5, 8.0]$, $\psi(0) \sim N(0, 0.01)$.
We  allow all normally distributed parameters take values over the entire real line, in contrast to \cite{Srirametal2020} who could not accommodate distributions with infinite support and required the normal variables to take values over finite intervals. 




This program requires the approximation of trigonometric functions for the computation of the location of the vehicle at time $t$. We used PCEs of degree 3, 5 and 9, built on  a basic standard normal random variable, to estimate the dynamics of $x$ and $y$, the first and second coordinates of the vehicle's location. For all three PCEs for $\sin$, the Prob-solvable loops tool estimates the first moment of $y$ to be the same, namely $-2/5$. We report the value of the first moment of $x$ in Table~\ref{tab:LiOModl}. The \textit{polynomial form} of \cite{Srirametal2020} can not be applied to approximate any moments of $x$ and $y$. We see that our PCE based estimate is very close to the ``true'' first moment of $x$, with the 9th degree PCE being the closest, as expected.

\bigskip

{\bf Rimless Wheel Walker.}
The \emph{Rimless wheel walker}~\cite{Srirametal2020,SteinhardtRuss2012} is a system that describes a human walking.
The system models a rotating wheel consisting of $n_s$ spokes, each of length $L$, connected at a single point.
The angle between consecutive spokes is $\theta=2\pi / n_s$.
We set $L = 1$ and $\theta = \pi/6.$
The Rimless wheel walker is modeled by the program in Fig.~\ref{fig:otherexamples}, C.
For more details we refer to~\cite{Srirametal2020}.


\bigskip

{\bf Robotic Arm Model.}
Proposed and studied in \cite{Bouissouetal2016,Sankaranarayanan2020,Srirametal2020}, this system models the position of a 2D robotic arm.
The arm  moves through translations and rotations.
Moreover, at every step, errors in movement are modeled with probabilistic noise.
The Robotic arm model is described by the program in Fig.~\ref{fig:otherexamples},~B.


\begin{table}[t!]
\centering
\footnotesize
\renewcommand{\arraystretch}{1.1}
\setlength\abovecaptionskip{8pt}
\setlength{\tabcolsep}{3pt}
\begin{tabular}{@{}lcccccc@{}}

\rotatebox{0}{\textbf{Benchmark}} & \rotatebox{0}{\textbf{Target}} & \rotatebox{0}{\textbf{Poly form}} & \rotatebox{0}{\textbf{Sim.}} & \rotatebox{0}{\textbf{Deg.}} & \rotatebox{0}{\textbf{Result}} & \rotatebox{0}{\showclock{0}{0} \textbf{Runtime}} \\ 
\toprule

\begin{tabular}{@{}l@{}} Taylor rule \\ model \end{tabular} &
\begin{tabular}{@{}c@{}} $\E \left(i_n\right)$ \\ $n{=}20$ \end{tabular} & 
$\times$ &
0.02298 &
\begin{tabular}{@{}c@{}} $3$ \\ $5$ \\ $9$ \end{tabular} & 
\begin{tabular}{@{}c@{}} $0.02278$ \\ $0.02295$ \\ $0.02300$ \end{tabular} &
\begin{tabular}{@{}c@{}}  $0.4s{+}0.5s$ \\  $0.5s{+}5.0s$ \\  $5.9s{+}34.6s$ \end{tabular} \\ 
\midrule

\begin{tabular}{@{}l@{}} Turning vehicle \\ model \end{tabular} &
\begin{tabular}{@{}c@{}} $\E \left(x_n\right)$ \\ $n{=}20$ \end{tabular} &
$\times$ &
15.69792 &
\begin{tabular}{@{}c@{}} $3$ \\ $5$ \\ $9$ \end{tabular} &
\begin{tabular}{@{}c@{}} $14.44342$ \\ $15.43985$ \\ $15.60595$ \end{tabular} &
\begin{tabular}{@{}c@{}} $0.6s{+}3.6s$ \\ $1.4s{+}9.2s$ \\  $15.6s{+}16.1s$ \end{tabular} \\ 
\midrule

\begin{tabular}{@{}l@{}} Turning vehicle \\ model (trunc.) \end{tabular} &
\begin{tabular}{@{}c@{}} $\E \left(x_n\right)$ \\ $n{=}20$ \end{tabular} &
\begin{tabular}{@{}c@{}} for deg. 2 \\ $\left[-3 \cdot 10^5, 3 \cdot 10^5\right]$ \\ \showclock{0}{37} 1117s \end{tabular} &
15.69882 &
\begin{tabular}{@{}c@{}} $3$ \\ $5$ \\ $9$ \end{tabular} &
\begin{tabular}{@{}c@{}} $14.44342$ \\ $15.43985$ \\ $15.60595$ \end{tabular} &
\begin{tabular}{@{}c@{}} $0.6s{+}3.6s$ \\ $1.4s{+}9.1s$ \\  $15.6s{+}19.0s$ \end{tabular} \\
\midrule

\begin{tabular}{@{}l@{}} Rimless wheel \\ walker \end{tabular} &
\begin{tabular}{@{}c@{}} $\E \left(x_n\right)$ \\ $n{=}2000$ \end{tabular} &
\begin{tabular}{@{}c@{}} for deg. 2 \\ $\left[1.791,1.792\right]$ \\ \showclock{0}{46} 46.1s \end{tabular} &
1.79155 &
\begin{tabular}{@{}c@{}} $1$ \\ $2$ \\ $3$ \end{tabular} &
\begin{tabular}{@{}c@{}} $1.79159$ \\ $1.79159$ \\ $1.79159$ \end{tabular} &
\begin{tabular}{@{}c@{}}  $0.2s{+}0.5s$ \\ $0.3s{+}0.5s$ \\  $0.6s{+}0.5s$ \end{tabular} \\ 
\midrule

\begin{tabular}{@{}l@{}} Robotic arm \\ model \end{tabular} &
\begin{tabular}{@{}c@{}} $\E \left(x_n\right)$ \\ $n{=}100$ \end{tabular} &
\begin{tabular}{@{}c@{}} for deg. 2 \\ $\left[268.87,268.88\right]$ \\ \showclock{0}{36} 35.8s \end{tabular} &
268.853 &
\begin{tabular}{@{}c@{}} $1$ \\ $2$ \\ $3$ \end{tabular} &
\begin{tabular}{@{}c@{}} $268.85236$ \\ $268.85227$ \\ $268.85227$ \end{tabular} &
\begin{tabular}{@{}c@{}}  $1.3s{+}0.3s$ \\  $2.5s{+}0.6s$ \\  $4.8s{+}0.7s$ \end{tabular} \\ 
\bottomrule
\end{tabular}
\caption{
Evaluation of our approach on $4$ benchmarks.
Poly form = the interval for the target as reported in \cite{Srirametal2020};
Sim = target approximated through $10^6$ samples;
Deg. = maximum degrees used for the approximation of the non-linear functions;
Result = result of our method per degree;
Runtime = execution time of our method in seconds (time of PCE + time of \textit{Polar}).
}
\label{tab:LiOModl}
\end{table}

\medskip

The \emph{Rimless wheel walker} and the \emph{Robotic arm model} are the only two benchmarks from \cite{Srirametal2020} containing non-polynomial updates.
In \cite{Srirametal2020}, polynomial forms of degree 2 were used to compute bounding intervals for $\E(x_n)$ (for fixed $n$) for the \emph{Rimless wheel walker} and the \emph{Robotic arm model}.
Their tool does not support the approximation of logarithms (required for the \emph{Taylor rule model}) and distributions with unbounded support (required for the \emph{Turning vehicle model}).
To facilitate comparison to polynomial forms, our set of benchmarks is augmented with a version of the \emph{Turning vehicle model} using truncated normal distributions instead of normal distributions with unbounded support (\emph{Turning vehicle model (trunc.)} in Table~\ref{tab:LiOModl}).
We note that the technique in \cite{Srirametal2020} supports more general probabilistic loops than  Prob-solvable loops.
However, as already mentioned in Section~\ref{sec:prob-solvable-loops}, we emphasize that our results in Sections~\ref{sec:PCE}-\ref{sec:genfun} are not limited to Prob-solvable loops and can be applied to approximate non-linear dynamics for more general probabilistic loops.

Table~\ref{tab:LiOModl} summarizes the evaluation of our approach on these four benchmarks and of the technique based on \textit{polynomial forms} of \cite{Srirametal2020} on the directly comparable \emph{Turning vehicle model (trunc.)}, \emph{Rimless wheel walker} and the \emph{Robotic arm} models. 
Our results illustrate that our method is able to accurately approximate general non-linear dynamics for challenging programs. Specifically, for the \emph{Rimless wheel walker} model, our first moment estimate is reached upon with a first degree approximation, is close to the truth up to the fourth decimal and falls in the interval estimate of the polynomial forms technique. For the \emph{Robotic arm model}, our results lie outside the interval predicted by the polynomial forms technique, yet are closer to the simulation (``truth'') calculated with $10^6$ samples.
Moreover, our simulation agrees with the estimation provided in \cite{Srirametal2020}.

Our experiments also demonstrate that our method provides suitable approximations in a fraction of the time required by the technique based on polynomial forms.
While polynomial forms additionally provide an error interval, they need to be computed on an iteration-by-iteration basis.
In contrast, our method based on PCE and Prob-solvable loops computes an expression for the target parameterized by the loop iteration $n \in \mathbb{N}$ (cf. Fig.~\ref{fig:turningexample}).
As a result, increasing the target iteration does \emph{not} increase the runtime of our approach.

Both \emph{Robotic arm} and  \emph{Rimless wheel walker} models contain no stochastic accumulation: each basic random variable is iteration-stable and  can be estimated using the scheme in Section~\ref{sec:stable}.
Therefore, for these two benchmarks, our estimation converges exponentially to the true values. 
On the other hand, the \emph{Taylor rule model} and the \emph{Turning vehicle model} contain stochasticity accumulation, which leads to the instability of the distributions of basic random variables.
We apply the scheme in Section~\ref{sec:non-stable} for these two examples.

\section{Conclusion}\label{sec:conclusion} 


We present an approach to compute the moments of the distribution of random outputs in probabilistic loops with non-linear, non-polynomial updates. Our method is based on polynomial chaos expansion to approximate non-polynomial general functional assignments.
The approximations produced by our technique have optimal exponential convergence when the parameters of the general non-polynomial functions have distributions that are stable across all iterations.
We derived an upper bound on the approximation error for the case of unstable parameter distributions.
Our methods can accommodate  non-linear, non-polynomial updates in classes of probabilistic loops amenable to automated moment computation, such as the class of Prob-solvable loops.
Moreover, our techniques can be used for moment approximation for uncertainty quantification in more general probabilistic loops.
Our experiments demonstrate the ability of our methods to characterize  non-polynomial behavior  in stochastic models from various domains via their moments, with high accuracy and in a fraction of the time required by other state-of-the-art tools.

%
%
%

\newpage
\bibliographystyle{splncs04}
\bibliography{refs}
\newpage

\appendix
\section{Proof of Theorem \ref{err-bound}}\label{app:A}

\begin{proof}[Thm. \ref{err-bound}]
Since $f(z) = 0$ $\forall z\notin \left[a, b\right]$,
\begin{gather}
  \left \| g(z) - \sum\limits_{i=0}^{T}c_{i}p_{i}(z)\right \|_{f}^{2} = 
\int\limits_{a}^{b}\left(g(z) - \sum\limits_{i=0}^{T}c_{i}p_{i}(z)\right)^{2}f(z)dz \quad  \notag \\
=\int\limits_{-\infty}^{\infty}\left(g(z) - \sum\limits_{i=0}^{T}c_{i}p_{i}(z)\right)^{2}f(z)dz\notag \\
= \int\limits_{-\infty}^{a}\left(g(z) - \sum\limits_{i=0}^{T}c_{i}p_{i}(z)\right)^{2}f(z)dz 
+ \int\limits_{b}^{\infty}\left(g(z) - \sum\limits_{i=0}^{T}c_{i}p_{i}(z)\right)^{2}f(z)dz \notag \\
+ \int\limits_{a}^{b}\left(g(z) - \sum\limits_{i=0}^{T}c_{i}p_{i}(z)\right)^{2}f(z)dz  \notag \\
\leq \int\limits_{-\infty}^{a}\left(g(z) - \sum\limits_{i=0}^{T}c_{i}p_{i}(z)\right)^{2}\phi(z)dz 
 + \int\limits_{b}^{\infty}\left(g(z) - \sum\limits_{i=0}^{T}c_{i}p_{i}(z)\right)^{2}\phi(z)dz \notag \\
+ \int\limits_{a}^{b}\left(g(z) - \sum\limits_{i=0}^{T}c_{i}p_{i}(z)\right)^{2}\phi(z)dz  
+ \int_{a}^{b}\left(g(z) - \sum\limits_{i=0}^{T}c_{i}p_{i}(z)\right)^{2}(f(z)-\phi(z))dz \notag \\
= A + B + C + D \label{star}
\end{gather}
Since $ f(z) -\phi(z)  \leq  \phi(z) + f(z)$, $D$ satisfies 
\begin{gather*}
\int\limits_{a}^{b}\left(g(z) - \sum\limits_{i=0}^{T}c_{i}p_{i}(z)\right)^{2}(f(z)-\phi(z))dz \\ \leq \int\limits_{a}^{b}\left(g(z) - \sum\limits_{i=0}^{T}c_{i}p_{i}(z)\right)^{2}dz\int\limits_{a}^{b}(\phi(z) + f(z))dz  \\
= (1 + \Phi(b) - \Phi(a)) \times  \int\limits_{a}^{b}\left(g(z) - \sum\limits_{i=0}^{T}c_{i}p_{i}(z)\right)^{2}dz,
\end{gather*}
with $(1 + \Phi(b) - \Phi(a)) < 2.$
Now, \[ 1 \leq \frac{\phi(z)}{\min{(\phi(a), \phi(b))}} \quad \forall z \in \left[a, b\right],\] and hence 
\begin{gather}
   \int\limits_{a}^{b}\left(g(z) - \sum\limits_{i=0}^{T}c_{i}p_{i}(z)\right)^{2}dz \notag \\
   \leq  \min{(\phi(a), \phi(b))}^{-1}\int\limits_{a}^{b}\left(g(z) - \sum\limits_{i=0}^{T}c_{i}p_{i}(z)\right)^{2}\phi(x)dz \notag\\
\leq \min{(\phi(a), \phi(b))}^{-1} C. \label{2nd} 
\end{gather} 
By \eqref{2nd} and \eqref{truef}, \eqref{star} satisfies
\begin{gather}
 A+B+C+D \leq \left(\frac{2}{\min{(\phi(a), \phi(b))}} + 1 \right)  
\Bigg\{ \int\limits_{-\infty}^{a} \left(g(z) - \sum\limits_{i=0}^{T}c_{i}p_{i}(z)\right)^{2}\phi(z)dz \notag \\ + 
\int\limits_{b}^{\infty}\left(g(z) - \sum\limits_{i=0}^{T}c_{i}p_{i}(z)\right)^{2}\phi(z)dz 
 +  \int\limits_{a}^{b}\left(g(z) - \sum\limits_{i=0}^{T}c_{i}p_{i}(z)\right)^{2}\phi(z)dz  \Bigg\}  \notag \\
= \left(\frac{2}{\min{(\phi(a), \phi(b))}} + 1 \right) 
  \int\limits_{-\infty}^{\infty}\left(g(z) - \sum\limits_{i=0}^{T}c_{i}p_{i}(z)\right)^{2}\phi(z)dz \notag\\
   = \left(\frac{2}{\min{(\phi(a), \phi(b))}} + 1 \right)\sum\limits_{i=T+1}^{\infty}c_{i}^{2} \leq \left( \frac{2}{\min{(\phi(a), \phi(b))}} + 1 \right) \var_{\phi}\left[g(Z)\right]  \notag 
\end{gather} 
since $\var_\phi(g(Z))=\sum_{i=1}^\infty c_i^2$.
In consequence, the error \eqref{approxer} 
can be upper bounded by \eqref{bound}.
\end{proof}

\section{Computation Algorithm in Detail}\label{app:B}

We let $\D \in \mathbb{Z}^{L \times k}$ be
 the matrix with each row $j=1, \ldots, L$ containing the degrees of $Z_i$ (in column $i$) in the corresponding polynomial in \eqref{polyexpanse}. The first row corresponds to the constant polynomial (1), and the last row to $\bar{p}_{1}^{\bar{d}_1}(z_1) \ldots \bar{p}_{k}^{\bar{d}_k}(z_k)$. That is,
\[
\D = (d_{ji})_{j = 1,\ldots,L,\; i = 1,\ldots,k} = 
\begin{blockarray}{ccccc}
Z_{1} & Z_{2} & Z_{3} & ... & Z_{k} \\
\begin{block}{(ccccc)}
  0 & 0 & 0 & 0 & 0 \\
  0 & 0 & 0 & 0 & 1 \\
  \vdots & \vdots & \vdots  & \ddots & \vdots\\
  \bar{d}_{1} & \bar{d}_{2} & \bar{d}_{3} & \cdots & \bar{d}_{k} \\
\end{block}
\end{blockarray}
 \]
The computer implementation of the algorithm computes $c_j$ in \eqref{coef} for each $j$ combination (row) of degrees for the corresponding polynomials.
 
We apply the above computation to the following example.  
Suppose that  $X$ has a truncated normal distribution with parameters $\mu = 2$, $\sigma = 0.1$, and is supported over $\left[1, 3\right]$, and that $Y$ is uniformly distributed over $\left[1, 2\right]$. We expand  $g(x, y) = \log(x + y)$ along $X$ and $Y$, as follows. We choose the relevant highest degrees of expansion to be $\bar{d}_{X} = 2$ and $\bar{d}_{Y} = 2$. The pdf  of $Y$ is $f_{Y}(y) \equiv 1$, and of $X$ is $f_{X}(x) = e^{-\frac{(x-2)^{2}}{0.02}}/0.1TrMul\sqrt{2\pi}$, where the truncation multiplier, $TrMul$, equals $\int\limits_{1}^{3}e^{-\frac{(x-2)^{2}}{0.02}}dx/0.1\sqrt{2\pi}$.

The two sets of polynomials, $\{p_{i}\} = \{1, x, x^{2}\}$ and $\{q_{i}\} = \{1, y, y^{2}\}$, are linearly independent. Applying the Gram-Schmidt procedure to orthogonalize and normalize them, we obtain $\bar{p}_{0}(x) = 1$, $\bar{p}_{1}(x) = 10x - 20$, $\bar{p}_{2}(x) = 70.71067x^{2} -282.84271x  + 282.13561$, and $\bar{q}_{0}(y) = 1$, $\bar{q}_{1}(y) = 3.4641y - 5.19615$, $\bar{q}_{2}(y) =  13.41641y^{2} - 40.24922y + 29.06888$. 

In this case, $L = (1 + \bar{d}_{X}) * (1 + \bar{d}_{Y}) = 9$, and  $\D$ has 9 rows and 2 columns, 
\[
\D = (d_{ji}) =
\begin{blockarray}{ccc}
X & Y & \\
\begin{block}{(cc)c}
  0 & 0  & {} \leftarrow c_{1} = \hspace{7pt} 1.2489233\\
  0 & 1  & {} \leftarrow c_{2} = \hspace{7pt} 0.0828874\\
  0 & 2  & {} \leftarrow c_{3} = -0.0030768\\
  1 & 0  & {} \leftarrow c_{4} = \hspace{7pt} 0.0287925\\
  1 & 1  & {} \leftarrow c_{5} = -0.0023918\\
  1 & 2  & {} \leftarrow c_{6} = \hspace{7pt} 0.0001778\\
  2 & 0  & {} \leftarrow c_{7} = -0.0005907\\
  2 & 1  & {} \leftarrow c_{8} = \hspace{7pt} 0.0000981\\
  2 & 2  & {} \leftarrow c_{9} = -0.0000109\\
\end{block}
\end{blockarray}
 \]
Iterating through the rows of matrix $\D$ and choosing the relevant combination of degrees of polynomials for each variable, we  calculate the Fourier coefficients,
$$c_{j} = \int_{1}^{3}\int_{1}^{2}log(x + y)\bar{p}_{d_{j1}}(x)\bar{q}_{d_{j2}}(y)f_{X}(x)f_{Y}(y)dydx.$$
The final estimator can be derived by summing up the products of each coefficient and the relevant combination of polynomials:
$$\log(x + y) \approx \hat{g}(x, y) =  \sum_{j=1}^{9}c_{j}\bar{p}_{d_{j1}}(x)\bar{q}_{d_{j2}}(y) \approx $$
$$-0.01038x^{2}y^{2} + 0.05517x^{2}y- 0.10031x^{2} + $$ 
$$ + 0.06538xy^{2} - 0.37513xy + 0.86515x - $$
$$- 0.13042y^{2} + 0.93998y - 0.59927.$$
The estimation error is 
\begin{equation}\notag
    se(\hat{g}) = \sqrt{\int\limits_{1}^{3}\int\limits_{1}^{2}\left(\log(x+y)) - \hat{g}(x, y)\right)^{2}f_{X}(x)f_{Y}(y)dydx} \approx 0.000151895
\end{equation}

\section{PCEs of exponential and trigonometric functions}\label{app:C}

Table \ref{tab:LiOFunc} lists examples of functions of one up to three random arguments approximated by PCE's of different degrees and, correspondingly, number of coefficients. We use $TruncNormal (\mu, \sigma^{2}, \left[a, b\right])$ to denote the truncated normal distribution with expectation $\mu$ and  standard deviation $\sigma$ on the (finite or infinite) interval $\left[a, b\right]$, and $TruncGamma(\theta, k, \left[a, b\right])$ for the  truncated gamma distribution on the (finite or infinite) interval $\left[a, b\right]$, $a,b>0$, with shape parameter $k$ and scale parameter $\theta$.
The  approximation error in \eqref{error_std} is reported in the last column. The results confirm \eqref{convergence} in practice: the error decreases as the degree or, equivalently, the number of  components in the approximation of the polynomial increases.  

\begin{table}[h!]
\centering
\renewcommand{\arraystretch}{1.3}
\setlength\belowcaptionskip{8pt}
\resizebox{\textwidth}{!}{
\begin{tabular}{@{}llcc@{}} 
 \toprule
 Function & Random Variables & Degree / \#coefficients & Error\\ \midrule
 
 \begin{tabular}{@{}l@{}}$f(x_{1}, x_{2}) = \xi e^{-x_{1}} + (\xi - \frac{\xi^{2}}{2}) e^{x_{2}-x_{1}}$ \\ $\xi = 0.3$ \end{tabular} & 
  \begin{tabular}{@{}l@{}}$x_{1} \sim Normal(0, 1),$ \\ $x_{2} \sim Normal(2, 0.01)$ \end{tabular} & 
  \begin{tabular}{@{}c@{}} 1 / 4 \\ 2 / 9 \\ 3 / 16 \\ 4 / 25 \\ 5 / 36 \end{tabular} & 
  \begin{tabular}{@{}c@{}} 3.076846 \\ 1.696078 \\ 0.825399 \\ 0.363869 \\ 0.270419 \end{tabular} \\ 
  \hline

 $f(x_{1}, x_{2}) = 0.3 e^{x_{1} - x_{2}} + 0.6 e^{ - x_{2}}$ & 
  \begin{tabular}{@{}l@{}}$x_{1} \sim TruncNormal(4, 1, \left[3, 5\right]),$ \\ $x_{2} \sim TruncNormal(2, 0.01, \left[0, 4\right])$ \end{tabular} & 
  \begin{tabular}{@{}c@{}} 1 / 4 \\ 2 / 9 \\ 3 / 16 \\ 4 / 25 \\ 5 / 36 \end{tabular} & 
  \begin{tabular}{@{}c@{}} 0.343870 \\ 0.057076 \\ 0.007112 \\ 0.000709 \\ 0.000059 \end{tabular} \\ 
  \hline

 $f(x_{1}, x_{2}) = e^{x_{1}x_{2}}$ & 
 \begin{tabular}{@{}l@{}}$x_{1} \sim TruncNormal(4, 1, \left[3, 5\right])$ \\ $x_{2} \sim TruncGamma(3, 1, \left[0.5, 1\right])$ \end{tabular} & 
 \begin{tabular}{@{}c@{}} 1 / 4 \\ 2 / 9 \\ 3 / 16 \\ 4 / 25 \\ 5 / 36 \end{tabular} & 
 \begin{tabular}{@{}c@{}} 5.745048 \\ 1.035060 \\ 0.142816 \\ 0.016118 \\ 0.001543 \end{tabular}\\ 
 \hline

 \begin{tabular}{@{}l@{}}$f(x_{1}, x_{2}, x_{3}) = 0.3 e^{x_{1} - x_{2}} + $ \\ \hspace{2.25cm} $ 0.6e^{x_{2} - x_{3}} +  0.1 e^{x_{3} - x_{1}}$  \end{tabular} &
 \begin{tabular}{@{}l@{}}$x_{1} \sim TruncNormal(4, 1, \left[3, 5\right])$ \\ $x_{2} \sim TruncGamma(3, 1, \left[0.5, 1\right])$ \\ $x_{3} \sim U\left[4, 8\right]$\end{tabular} & 
 \begin{tabular}{@{}c@{}} 1 / 8 \\ 2 / 27 \\ 3 / 64 \end{tabular} & 
 \begin{tabular}{@{}c@{}} 1.637981 \\ 0.303096 \\ 0.066869 \end{tabular}\\
 \hline

 \begin{tabular}{@{}l@{}}$f(x_{1}) = \psi cos(x_{1}) + (1 - \psi)sin(x_{1})$ \\ $\psi = 0.3$  \end{tabular} & $x_{1} \sim Normal(0, 1)$ & 
 \begin{tabular}{@{}c@{}} 1 / 2 \\ 2 / 3 \\ 3 / 4 \\ 4 / 5 \\ 5 / 6 \end{tabular} &  
  \begin{tabular}{@{}c@{}} 0.222627 \\ 0.181681 \\ 0.054450 \\ 0.039815 \\ 0.009115 \end{tabular}\\ 
\bottomrule
\end{tabular}
}
\vspace{0.5em}
\caption{
Approximations of $5$ non-linear functions using PCE.
}
\label{tab:LiOFunc}
\end{table}

\end{document}